
\documentclass{tJBD2e}

\usepackage{epstopdf}
\usepackage{subfigure}

\theoremstyle{plain}
\newtheorem{theorem}{Theorem}[section]

\theoremstyle{definition}

\theoremstyle{remark}

\pagestyle{plain}
 \usepackage{xcolor}
\usepackage{amsfonts,amsmath,amsthm}
\usepackage{graphicx}
\usepackage{epstopdf}
\usepackage{subfigure}
\usepackage{amsfonts,amssymb}
\usepackage{amsmath,amscd}
\usepackage{amsmath}
\usepackage{listings, fancyhdr,subfigure}
\newcommand\dif{\mathrm{d}}
\usepackage{mathrsfs}

\setlength{\oddsidemargin}{-0.25in}
\setlength{\textwidth}{7.0in}
\setlength{\topmargin}{-0.7in}
\setlength{\textheight}{9.25in}

\usepackage[all]{xy}
\def\bc{\begin{center}}       \def\ec{\end{center}}
\def\ba{\begin{array}}        \def\ea{\end{array}}
\def\be{\begin{equation}}     \def\ee{\end{equation}}
\def\be*{\begin{equation*}}     \def\ee*{\end{equation*}}
\def\bea{\begin{eqnarray}}    \def\eea{\end{eqnarray}}
\def\bea*{\begin{eqnarray*}}  \def\eea*{\end{eqnarray*}}

\linespread{1.2}

\newif\ifstylew  \stylewtrue
\newcommand{\style}[1]{\ifstylew \textcolor{green}{  $\spadesuit$ }\ {\sf \bf \it  #1}\ \textcolor{green}{ $\spadesuit$ } \fi}
\stylewfalse   

\newif\ifnotesw \noteswtrue
\newcommand{\notes}[1]{\ifnotesw \textcolor{red}{  $\clubsuit$\ {\sf \bf \it  #1}\ $\clubsuit$ }\fi}
\begin{document}



\title{Two-sex mosquito model for the persistence of \emph{Wolbachia}}

\author{
\name{Ling Xue \textsuperscript{a}$^{\ast}$\thanks{$^\ast$Corresponding author. Email: lxue2@tulane.edu}
and Carrie A. Manore\textsuperscript{a} and Panpim Thongsripong \textsuperscript{b} and James M. Hyman\textsuperscript{a}}
\affil{\textsuperscript{a}  Department of Mathematics, Center for Computational Science, \\Tulane University, New Orleans, LA 70118;\\
\textsuperscript{b}School of Public Health and Tropical Medicine,\\ Tulane University, New Orleans, LA 70112  }
}

\maketitle

\begin{abstract}
We  develop and analyze an ordinary differential equation   model to investigate
the transmission dynamics of releasing \emph{Wolbachia}-infected mosquitoes to establish an endemic
infection in a population of wild uninfected mosquitoes.
\emph{Wolbachia} is a genus of  endosymbiotic bacteria that can infect mosquitoes and reduce their ability to transmit dengue virus.
Although the bacterium is transmitted vertically from infected mothers to their offspring,
it can be difficult to establish an endemic  infection in a wild mosquito population.
Our  transmission model for the adult and aquatic-stage mosquitoes takes into account \emph{Wolbachia}-induced fitness change and cytoplasmic incompatibility.
We show that, for a wide range of realistic parameter values, the basic reproduction number, $R_0$, is less than one. Hence, the epidemic will die out if only a few \emph{Wolbachia}-infected mosquitoes  are introduced into the wild population.
Even though the basic reproduction number is less than one, an endemic \emph{Wolbachia} infection can be established if a sufficient number of infected mosquitoes are released.
This threshold effect is created by a backward bifurcation with three coexisting equilibria: a stable zero-infection equilibrium, an intermediate-infection unstable endemic equilibrium, and a high-infection stable endemic equilibrium.
We analyze the impact of reducing the wild mosquito population before introducing the infected mosquitoes
and observed that the most effective approach to establish the infection in the wild is based on reducing mosquitoes in both the adult and aquatic stages.
\end{abstract}

\begin{keywords}
dengue; vertical transmission; backward bifurcation; cytoplasmic incompatibility; basic reproduction number

\end{keywords}

\section{Introduction} \label{Introduction}

 We use a disease transmission  model to investigate  the
 conditions for releasing \emph{Wolbachia}-infected mosquitoes that will establish
 an endemic infection in a population of wild uninfected mosquitoes.  The
 \emph{Wolbachia} infected mosquitoes are less able to  transmit dengue virus, and
the goal of the modeling effort is to better understand how this bacterium can be used
to control vector-borne diseases.  We observed that the basic
reproductive number for the \emph{Wolbachia} model is less than one for typical model
parameters, and so small \emph{Wolbachia} infestations will die out. However,
the model predicts that there is a critical threshold, and if a sufficient number of
infected mosquitoes are released, then an endemic \emph{Wolbachia}-infected
population of mosquitoes can be established.  We also investigated how
 this critical threshold can be reduced by first decreasing the population of wild mosquitoes.

Dengue  is  the world's most significant and widespread arthropod-borne viral disease \cite{Gibbons2002}. Each year, $400$ million people are infected with dengue virus in more than $100$ countries \cite{Kyle2008}, while the other one-third of the world's population is at risk. To date, there are no vaccines or specific therapy  available. Traditional control strategies focusing on  reducing population of \emph{Aedes} mosquito vectors have failed to slow the current dengue pandemic, especially in tropical communities \cite{Walker2011}.  The main vector,  \emph{Aedes aegypti},  rebounds in many areas and the secondary vector,   \emph{Aedes albopictus}  keeps   expanding its geographic distribution, leading to $30$ fold increase in cases over the past $50$ years \cite{Lam2011}, which necessitates effective novel alternatives to break dengue transmission cycles \cite{Walker2011}  targeting  \emph{Aedes aegypti} and \emph{Aedes albopictus}.

Increasing attention has been paid to controlling the spread of dengue by targeting mosquito longevity by introducing genetically modified mosquitoes or introducing endosymbiotic \emph{Wolbachia}  bacteria to shorten the mosquito lifespan \cite{Blagrove2012, McMeniman2009, Walker2011}.
That is,  \emph{Wolbachia}-infected mosquitoes are released to create a sustained infection in the wild (uninfected) population. If the infection is sustained, then the wild infected mosquitoes will be less effective in transmitting dengue fever.
We create and analyze a mathematical model to help understand the underlying dynamics of \emph{Wolbachia}-infected mosquitoes that are needed to create a sustained endemic \emph{Wolbachia} infection.  Once the \emph{Wolbachia} disease transmission model is well understood, one of our future goals will be to couple this model with a mosquito-human model for the spread of dengue.

\subsection{Wolbachia Bacteria}
The wMel strain of \emph{Wolbachia pipientis}   bacteria is a  maternally inherited endosymbiont infecting  more than $60
\%$ of all insect species.
This strain  has the ability to alternate host reproduction through parthenogenesis, which results in the development of unfertilized eggs, male killing, feminization, and cytoplasmic incompatibility (CI) \cite{Bourtzis1998, Neil1997} that prevents the eggs from forming viable offspring. The latter includes strategies for both the suppression and replacement of medically important mosquito populations. Cytoplasmic incompatibility  is an incompatibility
between the sperms and eggs induced by  \emph{Wolbachia} infection  \cite{Hoffmann1997} and has received considerable attention as a method to control vector-borne diseases \cite{Suh2013}.     \emph{Wolbachia}  even induces resistance to dengue virus in  \emph{Aedes aegypti}  \cite{Bian2010} and  limits transmission of dengue virus in  \emph{Aedes albopictus}  \cite{Mousson2012}.

 Uninfected females only mate successfully with uninfected males, while infected females can mate successfully with both uninfected and infected males \cite{Dobson2002}.    If a male fertilizes a female harboring the same type of infection, the offspring still can survive \cite{McMeniman2009}.
  When \emph{Wolbachia}-infected males mate with uninfected females, or females infected with a different  \emph{Wolbachia}  strain,
  then the CI often results in killing the embryos.  \cite{Hoffmann1997}.   Therefore, infected females have a reproduction advantage over uninfected females due to protection from CI \cite{Weeks2007}.

To  successfully transmit dengue virus, a vector must
imbibe virus particles during blood-feeding and survive
to the point that the pathogen can be biologically transmitted  to
the next vertebrate host  \cite{Rasgon2003}. This time period, called
extrinsic incubation period (EIP), varies with  ambient temperature, many other climatic factors, and characteristics
of the vector-parasite system \cite{Watts1987}.  Vectors that  survive long enough to transmit the
pathogen are called effective vectors \cite{Rasgon2003}. Typically dengue  virus has an incubation period as long as two weeks  to  transmit through \emph{Aedes aegypti}  populations \cite{Gibbons2002}. A life-shortening strain of  \emph{Wolbachia}  may halve the life span of  \emph{Aedes aegypti}  \cite{Moreira2009}.   \emph{Wolbachia}  infection may reduce  the rate of disease transmission  due to the reduction on the lifespan of infected mosquitoes or the interference with mosquito susceptibility to dengue virus.

\subsection{Existing Mosquito-Wolbachia Models}

Ordinary differential equation (ODE) models have been developed to explore key factors that determine the success of applying  \emph{Wolbachia}  to dengue control.
A single-sex model for  \emph{Wolbachia}  infection with both age-structured and unstructured models were presented to study the stability and equilibrium based on the assumption that  \emph{Wolbachia}  infection leads to increased mortality or reduced birth rate \cite{ Farkas2010}. A model assuming a fixed ratio of females and males addressed how pathogen protection affects  \emph{Wolbachia}  invasion \cite{Souto-Maior2014}. Age-structured and unstructured models combining males and females were found to be different in terms of existence and stability of equilibrium solutions \cite{ Farkas2010}.  A stochastic model for female mosquitoes was developed to  investigate the impact of introduction  frequency on establishment of  \emph{Wolbachia}
\cite{Jansen2008}.

Discrete-time models explored the impact of the type of immigration and the temporal dynamics of the host population  on the spread of \emph{Wolbachia}, assuming equal sex ratio between males and females \cite{Hancock2011}.
Discrete generation models for female mosquitoes were built to understand  unstable equilibrium  produced by reduced lifespan or lengthened development \cite{Turelli2010}.   Reaction-diffusion and integro-difference equation
model has been used to analyze the impact of  insect dispersal and infection spread on invasion of   \emph{Wolbachia}  \cite{Schofield2002}.

An ordinary differential equation (ODE) model
 was developed to  evaluate the desirable properties
of the   \emph{Wolbachia}  strain to be introduced to female mosquitoes,  assuming that \emph{Wolbachia}-infected mosquitoes have reduced lifespan and reduced capability to transmit dengue, and equal fraction of male and female mosquitoes \cite{Hughes2013}. A continuous time non-spatial model and  a reaction-diffusion model incorporating lifespan shortening and CI  were developed to study factors that determine the spatial spread of
 \emph{Wolbachia}  through a population of female  \emph{Aedes aegypti}  mosquitoes  assuming constant population size and perfect maternal transmission of \emph{Wolbachia} \cite{Schraiber2012}.  A two-sex deterministic model with deterministic
immature life stages and  stochastic female adult life stage  was developed to  understand  \emph{Wolbachia}  invasion into uninfected host population  \cite{Crain2011}.

A single strain model, two strain model, and spatial model were developed to study whether multi-stain of  \emph{Wolbachia}  can coexist in a spatial context \cite{Keeling2003}. A two-sex ODE model taking into account different death rates, but the same egg laying rates  of \emph{Wolbachia}-infected and \emph{Wolbachia}  uninfected mosquitoes \cite{Koiller2014} showed the basic reproduction number  is always less than one, and the complete infection equilibrium is locally asymptotically stable (LAS) due to positive determinant of the Jacobian matrix for the system.  Simulations showed that dengue epidemics will not occur when \emph{Wolbachia}   infection is sufficiently prevalent \cite{Koiller2014}.  A  two-sex mosquito model assuming  complete vertical transmission and equal death rates for male and female mosquitoes was  developed and four steady states were found \cite{Ndii2012}.

Most of these models  consider either a single-sex model for adult mosquitoes, or assume a fixed ratio between the number of male and female mosquitoes.  Also, most of the models assume  homogeneous death rates and egg laying rates for \emph{Wolbachia}-free and \emph{Wolbachia}-infected mosquitoes. Our model addresses both of these issues.
Our emphasis is to understand how \emph{Wolbachia} infection can be established in a wild population
of mosquitoes.

\subsection{Results}
We proposed a compartmental two-sex model to investigate the underlying mechanisms that may contribute to invasion and sustainable establishment of  \emph{Wolbachia}  in mosquito populations. We assigned female and male mosquitoes to different classes to understand corresponding roles that they are playing in the spread of  \emph{Wolbachia} in mosquito populations.

We showed that, for a wide range of realistic parameter values, the basic reproduction number, $R_0$, for this model is less than one. Hence, the epidemic will die out if only a few \emph{Wolbachia}-infected mosquitoes  are introduced into the wild population.
Even though the basic reproduction number is less than one, an endemic \emph{Wolbachia} infection can be established if a sufficient number of infected mosquitoes are released.
This threshold effect can be explained as  a backward bifurcation with three coexisting equilibria: a stable zero-infection equilibrium, an intermediate-infection unstable endemic equilibrium, and a high-infection stable endemic equilibrium.

If the number of infected individuals is below the unstable endemic equilibrium, then the infection decays to the zero-infection equilibrium.  Conversely, if the number of infected mosquitoes is greater than unstable endemic equilibrium, then the solution tends to the stable high-infection equilibrium.
 We identified the relationships between dimensionless combinations of model parameters and the initial conditions for \emph{Wolbachia} to be attracted to the high-infection state.
As expected, the  number of infected female mosquitoes needed to be released to establish the infection in a wild \emph{Wolbachia}-free population decreases as $R_0$ increases to one.
We analyzed the impact of reducing the wild mosquito population before introducing the infected mosquitoes.
We found that the most effective approach of reducing the number of infected mosquitoes needed to establish a wild \emph{Wolbachia}-infected population requires reducing wild mosquito populations in both the adult and aquatic stages before the release.  This could be accomplished by recursive spraying, or a combination of spraying and larvae control.

Our main findings are:
\begin{enumerate}
\item  Three equilibria, disease free equilibrium (DFE), endemic equilibrium (EE), and complete infection  equilibrium (CIE) coexist when $R_0<1$. Disease free equilibrium is a steady state when all individuals are \emph{Wolbachia}-free. Endemic equilibrium is a steady state when some individuals are \emph{Wolbachia}-free, the rest are infected with \emph{Wolbachia}. Complete infection equilibrium is a steady state when all individuals  are infected with \emph{Wolbachia}.
\item The backward bifurcation analysis of our \emph{Wolbachia} transmission model predicts that if $R_0<1$, then
there is a critical threshold for the number of infected mosquitoes released in the wild before the infection can be established. If we release too few \emph{Wolbachia}  infected mosquitoes, then \it Wolbachia \rm infection will die out.
\item Killing both aquatic state (eggs and larvae) and adult mosquitoes before releasing the infected mosquitoes greatly increases the chance that the infection will be established.  Our model quantifies  what fraction of wild mosquitoes must be killed before releasing \emph{Wolbachia}  infected  mosquitoes.

\end{enumerate}

After introducing the mathematical model, we summarize the key results  from the  analysis and numerical simulations.  We conclude with a  discussion of the relevance,  importance, and future directions for this work.

\section{Description of Model Framework}  \label{Methods}

 We developed an ODE model incorporating adult females (F),  adult males (M), and
an aggregated aquatic (A) stage that includes the egg, larvae, and pupae stages. The population dynamics of mosquitoes without taking into account \emph{Wolbachia} is in the Appendix, Equation \ref{E:mosquitopopulationmodel}.
The vertical transmission of  \emph{Wolbachia} from infected females to their offspring is a key factor
in establishing an endemic infected population.

Mosquitoes are grouped into six compartments: susceptible   aquatic stage,  $A_u$, infected  aquatic stage,   $A_w$, susceptible female mosquitoes, $F_u$, infected female mosquitoes, $F_w$, susceptible male mosquitoes, $M_{u}$, and   infected male mosquitoes, $M_{w}$.
The eclosion rates of susceptible female and male mosquitoes hatching from eggs are $ \psi\theta A_u$ and $\psi (1-\theta)A_u$, respectively. Similarly, the birth rates of infected female and male mosquitoes are $\psi \theta A_w$ and $\psi (1-\theta)A_w$. Death rates of uninfected male mosquitoes  and infected male mosquitoes are $\mu_{a}A_u$ and $\mu_{a}A_w$. Death rates of uninfected female mosquitoes  and infected female mosquitoes are $\mu_{fu}F_u$ and $\mu_{fw}F_w$. Death rates of uninfected male mosquitoes  and infected male mosquitoes are $\mu_{mu}M_u$ and $\mu_{mw}M_w$. Development rates of uninfected aquatic stage and infected aquatic stage of mosquitoes are $\psi A_u$ and $\psi A_w$.   The model parameters are described in Table \ref{T:statevarandparameters}.

The model (Fig. \ref{Fig:diagram})
 describing population dynamics of aquatic stage, adult male, and adult female mosquitoes is given by:
\allowdisplaybreaks
\begin{subequations} \label{E:mainmodel1}
\begin{align}
\frac{\dif A_{u}}{\dif t} &=B_{uu}  +v_u(B_{wu} + B_{ww})-\mu_aA_u-\psi A_u\\
\frac{\dif A_{w}}{\dif t} &=v_w ( B_{wu}+ B_{ww})-\mu_aA_w-\psi A_w\\
\frac{\dif F_{u}}{\dif t} &= b_f \psi A_u -\mu_{fu} F_u\\
\frac{\dif F_{w}}{\dif t} &= b_f  \psi A_w -\mu_{fw} F_w\\
\frac{\dif M_{u}}{\dif t} &= b_m  \psi A_u -\mu_{mu} M_u\\
\frac{\dif M_{w}}{\dif t} &= b_m \psi A_w-\mu_{mw} M_w ~~.
\end{align}
\end{subequations}
Because the vertical transmission and birth rates, $B_{**}$, depend on the sex of the infected or uninfected parents,
the model included the four egg laying situations
\begin{subequations} \label{E:birthrates}
\begin{align}
B_{uu} &= \phi_{u} F_u m_u \left(1-\frac{N_A}{K_a}\right )\\
B_{uw} &= 0 \\
B_{wu} &= \phi_{w} F_w m_u \left(1-\frac{N_A}{K_a}\right)\\
B_{ww} &= \phi_{w} F_w m_w \left(1-\frac{N_A}{K_a}\right) ~~.
\end{align}
\end{subequations}
Here $m_u= \frac{M_u}{M_w+M_u} $ and  $m_w=  1 - m_u $  are the fractions of uninfected and infected male mosquitoes.
$B_{uu}$ is the egg laying rate of uninfected females mating with uninfected males. $B_{uw} $ is the egg laying rate of uninfected females mating with infected males. $B_{wu}$ is the egg laying rate of infected females mating with uninfected males. $N_A$ is the total number of aquatic stage of mosquitoes, and $K_a$ is carrying capacity of aquatic stage of mosquitoes.
These equations reflect the observations in Table \ref{table:Production of offspring} that
\begin {itemize}
\item{Mating of uninfected males   with uninfected females produce uninfected offspring.}
\item{Mating of infected males  with uninfected females leads to death of embryos  before hatching due to cytoplasmic incompatibility.}
\item{Uninfected males and  infected females produce a fraction, denoted by $v_w$, of  infected offspring by vertical transmission.}
\item{Cross of infected males  with infected females produces a   fraction of infected offspring. }
\end{itemize}

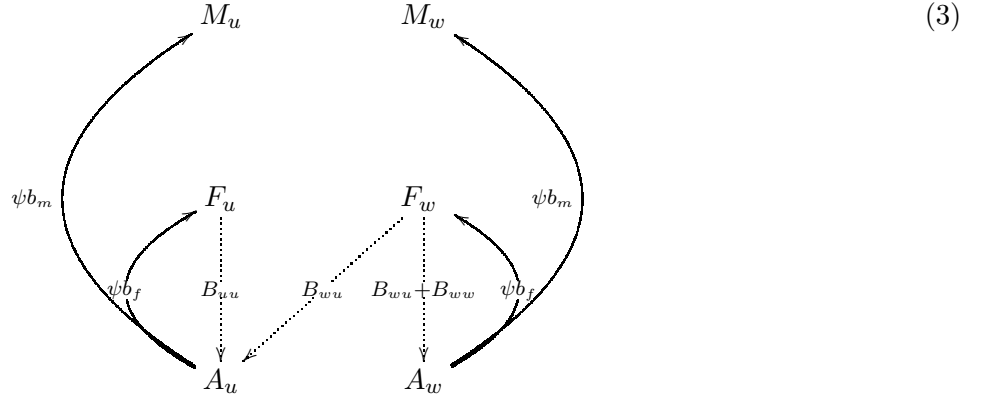
\begin{figure}
\centering
\begin{equation}
\xymatrix{
M_u
&&
M_w\\
&&\\
F_u
\ar@{.>}[dd]|-{B_{uu}}
&&
F_w
\ar@{.>}[ddll]|-{B_{wu}}\
\ar@{.>}[dd]|-{B_{wu}+B_{ww}}\\
&&\\
A_u
\ar@/^3pc/[uu]|{\psi  b_f  }
\ar@/^5pc/[uuuu]^{\psi b_m }
&&
A_w
\ar@/_3pc/[uu]|{\psi b_f  }
\ar@/_5pc/[uuuu]^{\psi b_m }
\\
&&
}
\end{equation}
\caption{
 The birthing rates (\ref{E:birthrates}) capture that when the uninfected males  mate with uninfected females,
 they produce uninfected offspring.
When infected males  mate with uninfected females, then CI causes the embryos to die before hatching. Uninfected males mating with  infected females produce a fraction, denoted by $v_w$, of  infected offspring by vertical transmission.
Cross of infected males  with infected females produces a   fraction of infected offspring.
}
\label{Fig:diagram}\end{figure}

\begin{table}[h!]
\begin{center}
\begin{tabular}{lp{6.5in}}
$M_{u}$: &Number of susceptible male mosquitoes\\
$M_{w}$: &Number of  male mosquitoes  infected with  \emph{Wolbachia} \\
$F_{u}$: &Number of susceptible female mosquitoes\\
$F_{w}$: &Number of female mosquitoes  infected with  \emph{Wolbachia} \\
$A_{u}$: &Number of susceptible aquatic stage of mosquitoes\\
$A_{w}$:&Number of infected aquatic stage of mosquitoes\\
$N_A$:&Total number of  aquatic stage of mosquitoes\\
$K_a$: &Carrying capacity of aquatic stage of mosquitoes\\
$b_f$: & Fraction of births that are female mosquitoes. \\
$b_m$: & Fraction of births that are male mosquitoes $=1-b_f$. \\
$m_w$: & Fraction of the male mosquitoes that are infected $=M_w/(M_w+M_u)$. \\
$m_u$: & Fraction of the male mosquitoes that are uninfected $=1-m_w$. \\
$v_w$: & Fraction of infected mosquito eggs produced by infected  female mosquitoes.   \\
$v_u$: & Fraction of uninfected mosquito eggs produced by infected  female mosquitoes $=1-v_w$.   \\
$\phi_u$: & Per capita egg laying rate by \emph{Wolbachia}-free mosquito eggs. Number of eggs/time\\
$\phi_w$: & Per capita egg laying rate by \emph{Wolbachia}-infected mosquito eggs. Number of eggs/time\\
$\psi$: & Per capita development rate of mosquito eggs.  Time$^{-1}$\\
$\mu_a$: & Per capita death rate of  aquatic stage of mosquitoes. Time$^{-1}$  \\
$\mu_{fu}$: & Per capita death rate  of uninfected female mosquitoes. Time$^{-1}$  \\
$\mu_{fw}$: & Per capita death rate  of infected female mosquitoes. Time$^{-1}$  \\
$\mu_{mu}$: & Per capita death rate of uninfected male mosquitoes. Time$^{-1}$\\
$\mu_{mw}$: & Per capita death rate of infected male mosquitoes. Time$^{-1}$
\end{tabular}
\caption{State variables and parameters for the model
\eqref{E:mainmodel1}}\label{T:statevarandparameters}
\end{center}
\end{table}
\begin{table}
\centering
\begin{tabular}{|c|c|c|c|}
\hline
Male &Female& Offspring\\
\hline
 Uninfected &  Uninfected & Uninfected\\
\hline
 Uninfected&Infected & Some offspring are infected\\
\hline
Infected& Uninfected & Unviable\\
\hline
Infected&Infected & Some offspring are infected\\
\hline
\end{tabular}
\caption{The model allows for \emph{Wolbachia} to be transmitted vertically from
infected parents to their offspring.
The offspring of  male and female uninfected mosquitoes are infected.  Some of
the offspring of male and female infected mosquitoes are infected, as are the offspring
of an uninfected male and an infected female mosquitoes.  The offspring of an
infected male and uninfected female mosquito are unviable. }
\label{table:Production of offspring}
\end{table}
\subsection{Model Analysis}
We compute the basic reproduction number and  the equilibria, and analyze the stability of the  equilibrium points.

We use  the next generation matrix approach to compute the basic reproduction number \cite{Van2002}. Only infected compartments are considered for ease of computation:
\begin{align*}&\frac{d}{dt}\left[
\begin{array}{rlll}A_w &F_w &M_w \end{array}\right]^T= \mathscr{F}-\mathscr{V}=\\&\begin{bmatrix}
\left(v_w F_w\phi_w m_u +v_w F_w\phi_w m_w \right)\left(1-\frac{N_A}{K_a}\right)\\
 b_f  \psi A_w \\
 b_m \psi A_w
\end{bmatrix}-\begin{bmatrix}
\mu_aA_w+\psi A_w\\
\mu_{fw} F_w\\
\mu_{mw} M_w
\end{bmatrix},
\end{align*}
where $\mathscr{F}=(\mathscr{F}_i)$ is a vector for new infected, and  $\mathscr{V}=(\mathscr{V}_i)$ is a vector for transfer between compartments.

Jacobian matrices for transmission, $F$, and transition, $V$,  \cite{Van2002} are defined as:
\begin{equation}\label{Jacobian}
F= \left[\frac{\partial \mathscr{F}_i (x^0)}{\partial x_j}\right], \quad V= \left[\frac{\partial \mathscr{V}_i(x^0)}{\partial x_j}\right],
\end{equation}
where $x^0$ represents the disease free equilibrium, and $x_j$ is the number or proportion of infected individuals in compartment $j$, $j=1, 2, \cdots, m$.

The unique disease free equilibrium is
\begin{align*}
A_u^0&=K_a\left(1-\frac{1}{R_{0u}}\right)\\
F_u^0&= b_f  \frac{\psi}{\mu_{fu}}A_u^0 \\
M_u^0&=b_m \frac{\psi}{\mu_{mu}}A_u^0\\
A_w^0&=F_w^0=M_w^0=0~~,
\end{align*}
where $R_{0u}=\frac{ b_f \phi_u\psi}{(\mu_a+\psi)\mu_{fu}}$ is the threshold for \emph{Wolbachia}-free offspring, $\frac{b_f\phi_u}{\mu_{fu}}$ is the total number of eggs laid by uninfected female mosquitoes,  $\frac{\psi}{\mu_a+\psi}$ is the probability that aquatic stage of mosquitoes survive to the point when they develop into adult mosquitoes. $R_{0u}$ is the number of female eggs that develop into adult mosquitoes. When $R_{0u}>1$, then the mosquito population may grow; otherwise, the population will decrease.

The Jacobian matrix of $\mathscr{F}$ evaluated at DFE is:
\begin{align*}
F&=\begin{bmatrix}
0&v_w  m_u^0 \phi_w\left(1-\frac{N_A^0}{K_a}\right)& 0\\
 b_f \psi&0& 0\\
 b_m \psi&0&0
\end{bmatrix}=\begin{bmatrix}
0& \frac{v_w(\mu_a+\psi)\mu_{fu}\phi_w}{ b_f \psi\phi_u}& 0\\
 b_f \psi&0& 0\\
 b_m \psi&0&0
\end{bmatrix},\end{align*}
where $m_u^0=\frac{M_u^0}{M_w^0+M_u^0}$.

The Jacobian matrix of $\mathscr{V}$ evaluated at DFE is:
\begin{equation*}
V=\begin{bmatrix}
\mu_a+\psi &0& 0\\
0&\mu_{fw}&0\\
0&0& \mu_{mw}\\
\end{bmatrix}.
\end{equation*}

The next generation matrix for infected compartments at disease free equilibrium is:
\begin{equation*}
FV^{-1}=\begin{bmatrix}
0&\frac{v_w(\mu_a+\psi)\mu_{fu}\phi_w}{ b_f \psi\mu_{fw}\phi_u}& 0\\
\frac{ b_f \psi}{\mu_a+\psi }&0& 0\\
\frac{ b_m \psi}{\mu_a+\psi }&0&0
\end{bmatrix}.
\end{equation*}

The basic reproduction number is the largest abosulute eigenvalue of $FV^{-1}$, denoted by $\rho(FV^{-1})$:
\begin{equation}
R_0=\rho(FV^{-1})=\sqrt{v_w   \frac{\phi_w}{\mu_{fw}} \Big(\frac{\phi_u}{\mu_{fu}}\Big)^{-1}} ~~. \label{equation:R0}
\end{equation}

\emph{Wolbachia}-infected  and \emph{Wolbachia}-free and females  produce
  $\frac{\phi_w}{\mu_{fw}}$ and $\frac{\phi_u}{\mu_{fu}}$ eggs during their lifetime, respectively. Hence,
  $\frac{\phi_w}{\mu_{fw}} \Big(\frac{\phi_u}{\mu_{fu}}\Big)^{-1}$
  is the ratio of the number of eggs produced by \emph{Wolbachia}-infected females to the number of  eggs produced by \emph{Wolbachia}-free females during their lifetime.  $R_0$ is geometric mean of  vertical transmission rate and  the ratio of the total number of eggs produced by \emph{Wolbachia}-infected females to the total number of  eggs produced by \emph{Wolbachia}-free females.

When the variables are ordered as: $[A_u \ F_u \ M_u \  A_w \  F_w \ M_w ]^T$, Jacobian matrix at equilibrium $[A_u^* \ F_u^* \ M_u^* \  A_w^* \  F_w^* \ M_w^* ]^T$ for system  of equations $(\ref{E:mainmodel1})$ is:
\[
J=\begin{bmatrix}
c&a&b&|&c_2 &a_2 &b_2\\
 b_f \psi&-\mu_{fu}&0&|&0 &0 &0\\
 b_m \psi&0& -\mu_{mu}&|&0 &0 &0\\
--&--&--&--&--&--&--\\
c_3&0&b_3&|&c_1&a_1&b_1\\
0&0&0&|& b_f \psi&-\mu_{fw}&0\\
0&0&0&|& b_m \psi&0& -\mu_{mw}\\
\end{bmatrix}
=\begin{bmatrix}
A&C\\
D&B\\
\end{bmatrix},
\]
where
\allowdisplaybreaks
\begin{align*}
a&=\phi_u m_u^* \left(1-\frac{A_u^*+A_w^*}{K_a}\right)\\
b
&=\phi_u m_w^* \frac{F_u^*}{M_w^*+M_u^*}\left(1-\frac{A_u^*+A_w^*}{K_a}\right)\\
c
&=-(\mu_a+\psi)-\phi_u m_u^*\frac{F_u^*}{K_a}-\phi_wv_u\frac{F_w^*}{K_a}\\
a_1
&=v_w\phi_w\left(1-\frac{A_u^*+A_w^*}{K_a}\right)\\
b_1
&=0\\
c_1
&=-(\mu_a+\psi)-\phi_w v_w  \frac{F_w^*}{K_a}\\
a_2&=\phi_w v_u \left(1-\frac{A_u^*+A_w^*}{K_a}\right)\\
b_2
&=-\phi_u m_u^* \frac{F_u^*}{M_w^*+M_u^*}\left(1-\frac{A_u^*+A_w^*}{K_a}\right)\\
c_2
&=- \phi_u m_u^* \frac{F_u^*}{K_a}-\phi_wv_u \frac{F_w^*}{K_a}\\
b_3
&=0\\
c_3
&=-\phi_w v_w \frac{ F_w^*}{K_a}
\end{align*}
At disease free equilibrium,
$D=0$, then the eigenvalues of $J$ are eigenvalues of $A$ and $B$. Matrices $A$ and $B$ are:
\begin{equation*}
A=\begin{bmatrix}
-(\mu_a+\psi)-\phi_u \frac{F_u^0}{K_a} &\frac{\mu_{fu}(\mu_a+\psi)}{ b_f \psi}& 0\\
 b_f \psi&-\mu_{fu}&0\\
 b_m \psi&0& -\mu_{mu}\\
\end{bmatrix}, \ \ \
B=\begin{bmatrix}
-(\mu_a+\psi) &\frac{v_w(\mu_a+\psi)\mu_{fu}\phi_w}{ b_f \psi\phi_u}& 0\\
 b_f \psi&-\mu_{fw}&0\\
 b_m \psi&0& -\mu_{mw}\end{bmatrix}.  \end{equation*}

If $R_{0u}>1$, then all eigenvalues of $A$ are negative.  If $R_0<1$, then all eigenvalues of $B$ are negative. Therefore, the system $(\ref{E:mainmodel1})$ at disease free equilibrium is LAS whenever $R_{0u}>1$ and $R_0<1$.
\subsubsection{Complete Vertical Transmission}
If $R_0<1$ and $R_{0w}>1$ and the vertical transmission is $100\%$, ($v_w=1$), then
the ratio of the infected to the uninfected aquatic stage mosquitoes, $k$,  is
\begin{equation*}
k = \frac{A_w^*}{A_u^*}=
\frac{\mu_{mw}}{\mu_{mu}}( {R_0^{-2}-1})~~,
\end{equation*}
and there is a unique endemic equilibrium:
\allowdisplaybreaks
\begin{subequations} \label{endemic equilibrium}
\begin{align}
A_{u}^*
&=\frac{ K_a}{1+k}\left(1-{R_{0w}^{-1}}\right)\\
A_{w}^*
&=k A_{u}^* \\
F_{u}^*
&=b_f   \frac{ \psi }{\mu_{fu}} A_{u}^* \\
F_{w}^*
&=k F_{u}^*\\
M_{u}^*
&= b_m   \frac{ \psi }{\mu_{mu}} A_{u}^*\\
M_{w^*}&=k M_{u}^*,
\end{align}
\end{subequations}
where $R_{0w}=\frac{v_w b_f \psi\phi_w}{(\mu_a+\psi)\mu_{fw}}$.








If we further assume that $\mu_{mu}=\mu_{mw}$, then $k= R_0^{-2}-1$
and the unique complete  infection equilibrium (CIE) is:
\begin{align*}
A_u&=0\\
A_w&=K_a\left(1- R_{0w}^{-1} \right)\\
F_u&=0\\
F_w&=b_f \frac{  \psi }{\mu_{fw}}A_w \\
M_u&=0\\
M_w&= b_m  \frac{\psi }{\mu_{mw}} A_w .
\end{align*}
Jacobian matrix of system of equations $(\ref{E:mainmodel1})$ at complete  \emph{Wolbachia}  infection equilibrium for  $v_w=1$  is:
\begin{align*}
J_{cw}&=\begin{bmatrix}
-\mu_a-\psi&0&0&|&0 &0 &0\\
 b_f \psi&-\mu_{fu}&0&|&0 &0 &0\\
 b_m \psi&0&-\mu_{mu}&|&0 &0 &0\\
--&--&--&--&--&--&--\\
-\frac{ b_f  \psi (1- R_0^2)\phi_w}{\mu_{fw}}\left(1- R_{0w}^{-1} \right)&0&0&|&-\frac{\phi_w b_f \psi}{\mu_{fw}}&\frac{(\mu_a+\psi)\mu_{fw}}{ b_f \psi}&0\\
0&0&0&|& b_f \psi&-\mu_{fw}&0\\
0&0&0&|& b_m \psi&0& -\mu_{mw}\\
\end{bmatrix}
=\begin{bmatrix}
 A&0\\
C&B\\
\end{bmatrix}.\end{align*}
Eigenvalues of $J_{cw}$ are composed of three eigenvalues of $A$ and three eigenvalues of $B$. The eigenvalues of $A$ are all negative.
Characteristic polynomial of $B$ is:
\begin{equation}
\label{char_eqn}
\lambda^2+\left(\mu_{fw}+\frac{\phi_w b_f \psi}{\mu_{fw}}\right)\lambda+\phi_w b_f \psi-(\mu_a+\psi)\mu_{fw}=0~~.
\end{equation}
If $R_{0w}>1$,  all eigenvalues of $B$ are negative. Therefore, the complete infection equilibrium is LAS whenever $R_{0w}>1$, as shown in Figure \ref{fig:nu=1endemic}.

\subsubsection{Incomplete Vertical Transmission}
When vertical transmission is incomplete, i.e., $0<v_w<1$, then at endemic equilibrium,
\begin{equation}
\frac{A_w^*}{A_u^*}=\frac{v_w-v_u \frac{\mu_{mw}}{\mu_{mu}}\pm
\sqrt{\left(v_u\frac{\mu_{mw}}{\mu_{mu}}-v_w\right)^2
-4v_u v_w (R_0^{-2}-1) \frac{\mu_{mw}}{\mu_{mu}}  }} {2v_u}~~.\end{equation}
We assume $\mu_{mu}=\mu_{mw}$, and
let $ k=\frac{A_w^*}{A_u^*}$. When
$\left(v_u \frac{\mu_{mw}}{\mu_{mu}}-v_w\right)^2 > 4v_u v_w (R_0^{-2}-1) \frac{\mu_{mw}}{\mu_{mu}}$, that is,
$R_0 >2\sqrt{ v_wv_u}$.  Equation \ref{char_eqn} has two roots:
$k_1=\frac{2v_w-1+\sqrt{1-4v_wv_uR_0^{-2}}}{2v_u}$ and  $k_2=\frac{2v_w-1-\sqrt{1-4v_wv_uR_0^{-2}}}{2v_u}$.


If $0.5<v_w<1$ and $4v_w(v_w-1)<R_0<1$,  two endemic equilibria exist as shown in Figure \ref{fig:nu=0.9endemic}, \ref{fig:nu=0.8endemic}, and \ref{fig:nu=0.75endemic}.
When $k=k_1$, the endemic equilibrium is  LAS. When  $k=k_2$, the equilibrium is not LAS and backward bifurcation occurs.
If $v_w\leq 0.5$ and $R_0\leq 1$, then endemic equilibrium does not exist, only DFE exists. When $k_1=k_2$, then $R_0=2\sqrt{v_wv_u}$. Let $R_0^*=2\sqrt{v_wv_u}$, which is the intersection of unstable and stable endemic equilibrium.
 When $R_0> 1$, a unique endemic equilibrium point exists with $k=k_1$. It is proven to  be LAS by numerical simulations.




\begin{figure}[!htbp]
\centering
\subfigure[$v_w=1$.]{
\label{fig:nu=1endemic}
\includegraphics[angle=0,width=6.7cm,height=5.8cm]{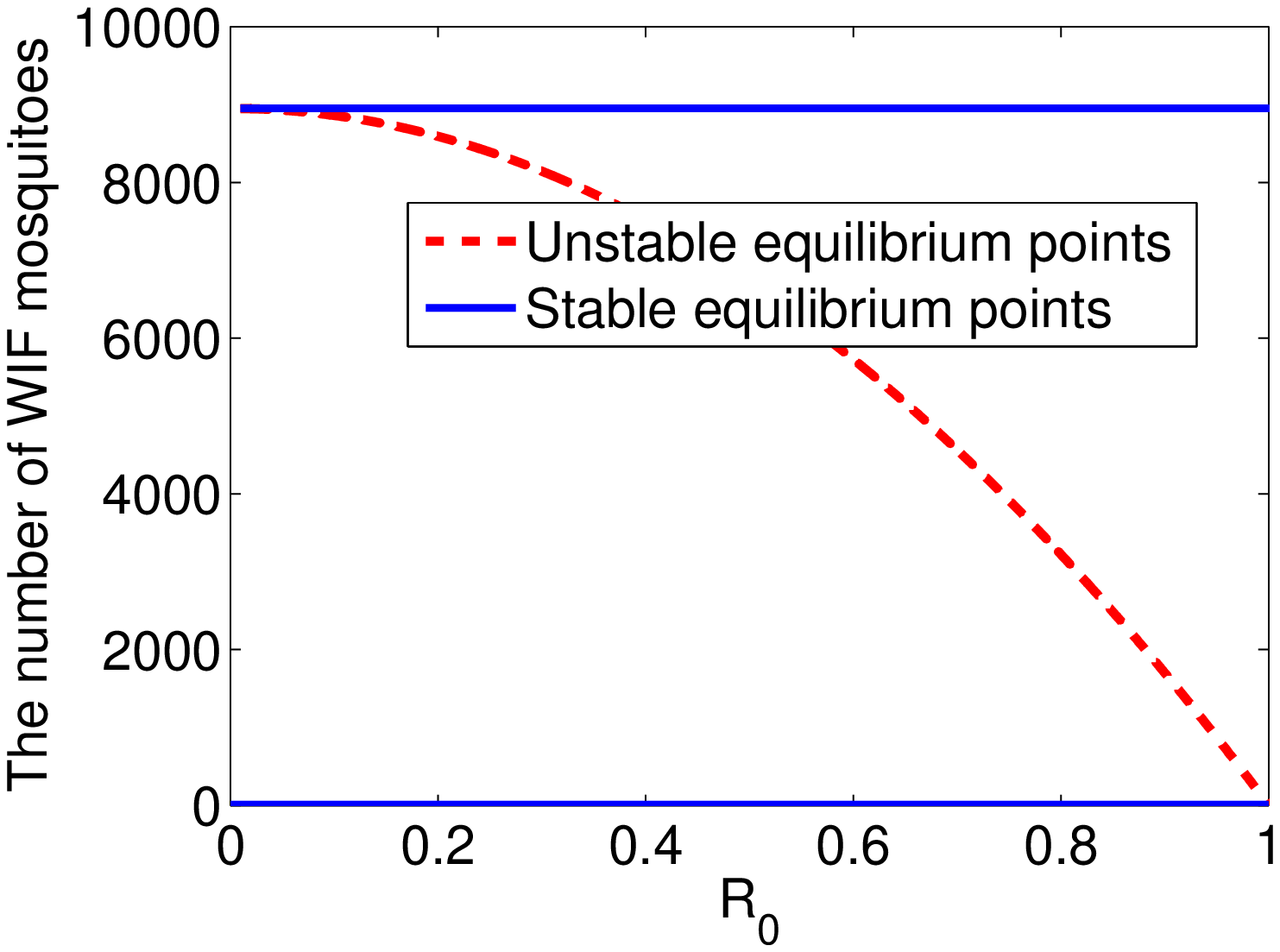}}
\hspace{0.1in}
\subfigure[$v_w=0.9$.]{
\label{fig:nu=0.9endemic}
\includegraphics[angle=0,width=6.7cm,height=5.8cm]{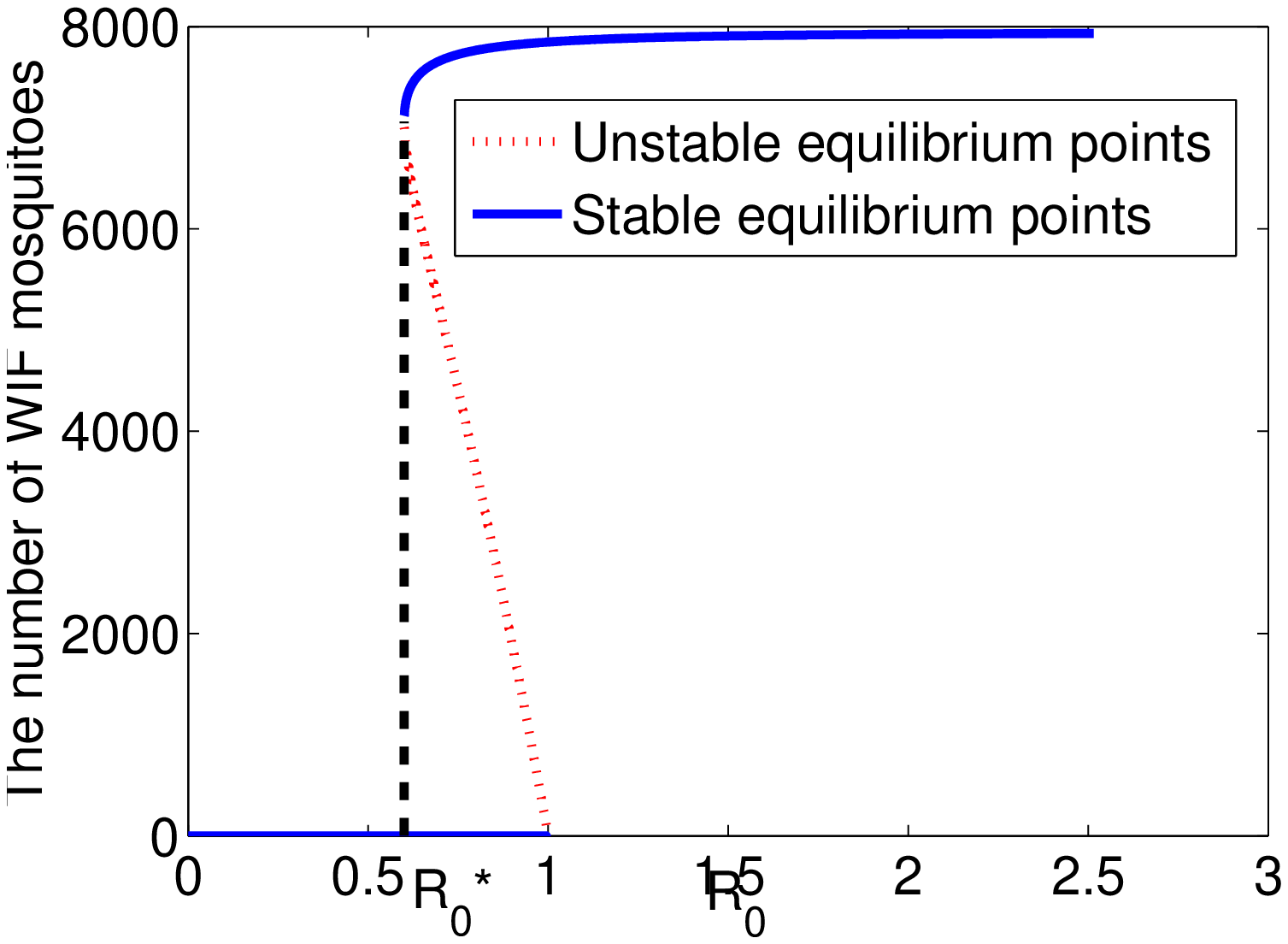}}
\hspace{0.1in}
\subfigure[ $v_w=0.8$.]{
\label{fig:nu=0.8endemic}
\includegraphics[angle=0,width=6.7cm,height=5.8cm]{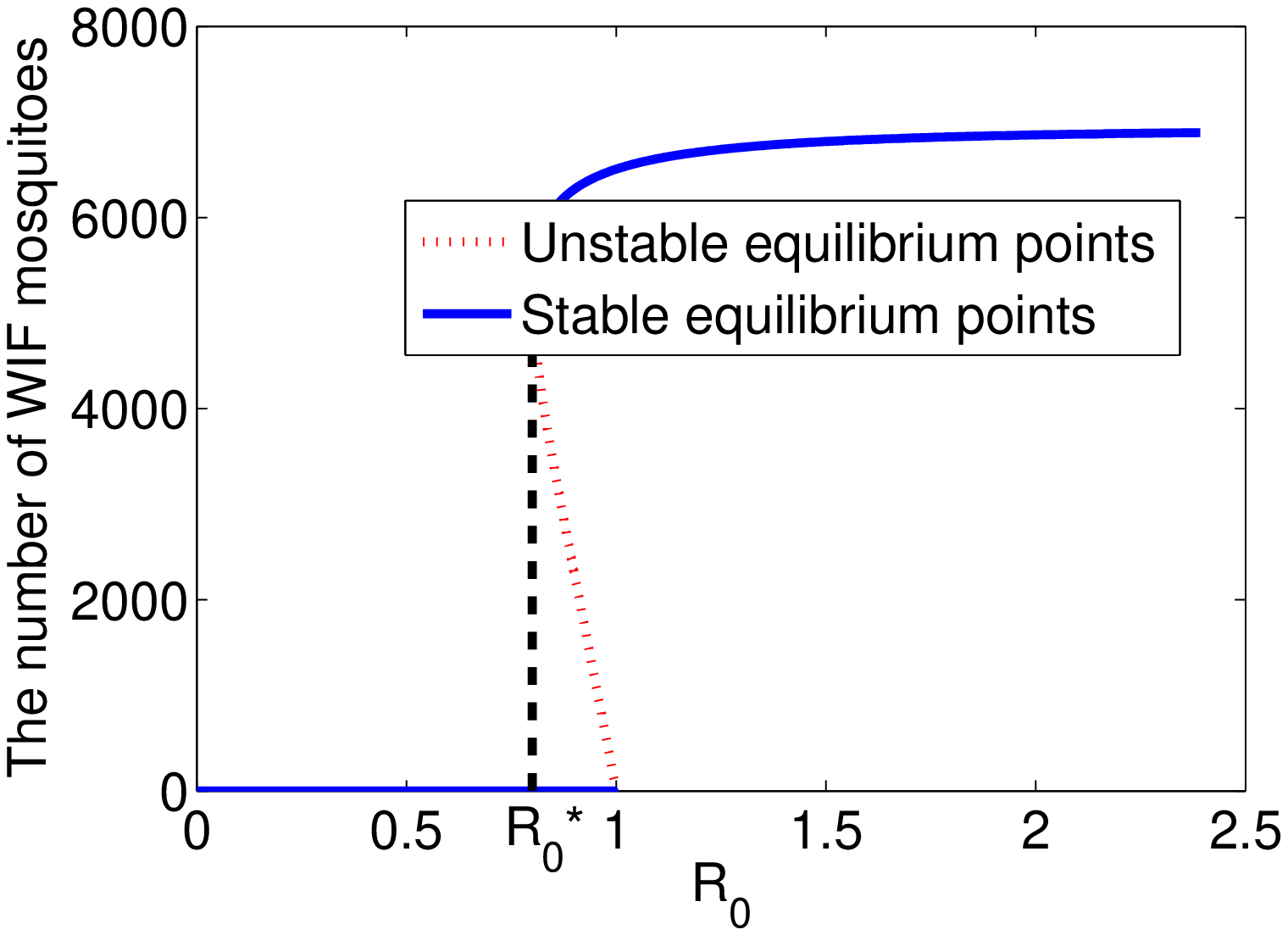}}
\hspace{0.1in}
\subfigure[ $v_w=0.75$.]{
\label{fig:nu=0.75endemic}
\includegraphics[angle=0,width=6.7cm,height=5.8cm]{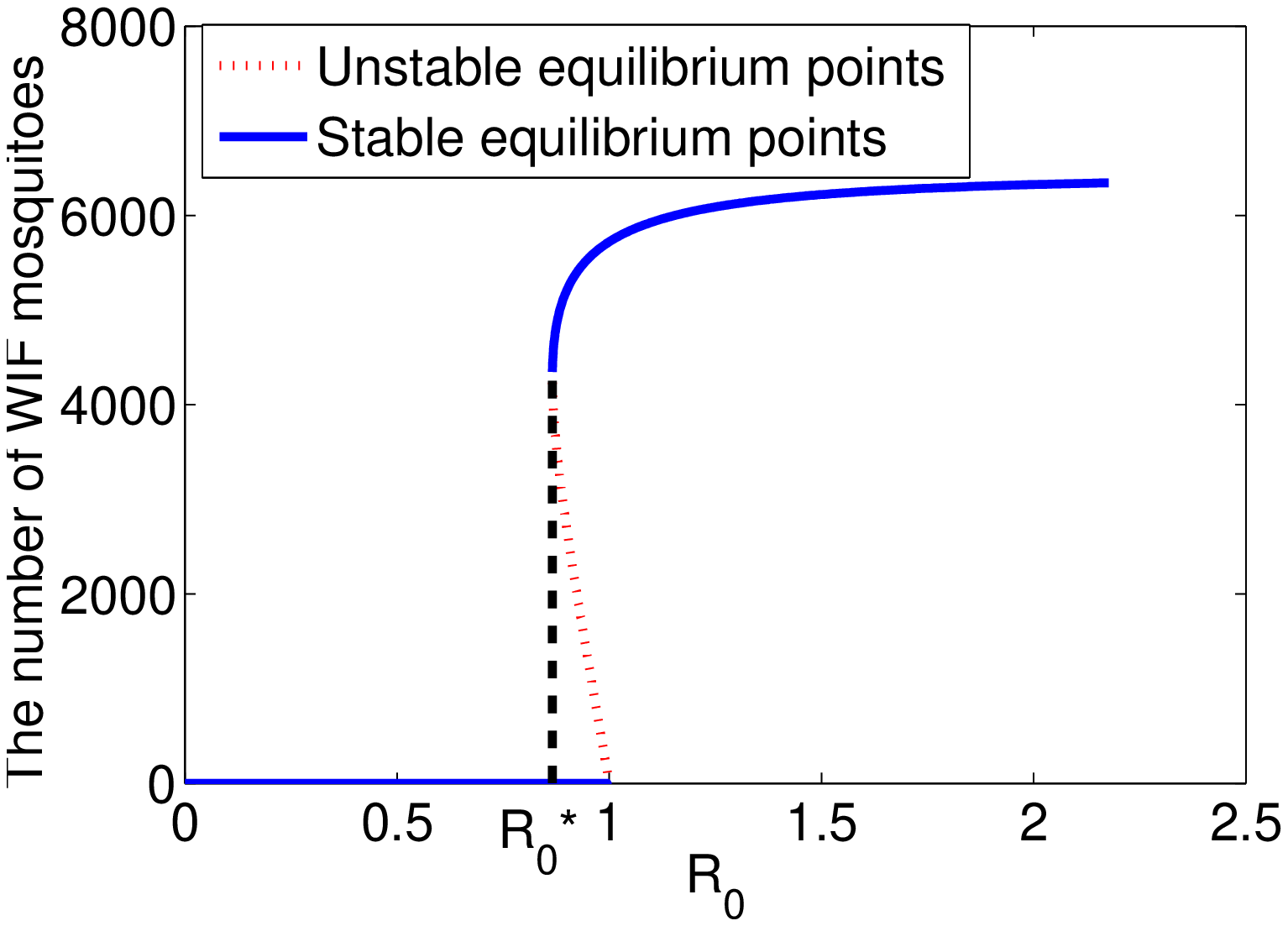}}
\hspace{0.1in}
\caption{Bifurcation diagrams for  \emph{Wolbachia}  vertical transmission.  $\phi_u$ and $\mu_{fu}$ are varying, other parameter values are the same as those baseline values in Table \ref{table:baseline and ranges}. Denote the intersection of two endemic equilibrium, that is, the intersection of the black dashed line and  x-axis, as $R_0^*=\sqrt{4v_wv_u}$. When   $R_0<1$ and $v_w<0.5$, no endemic equilibria exist.
When $R_0<1$ and $v_w>0.5$, as the vertical transmission rate increases so does $R_0^*$ and the LAS equilibrium approaches a constant.   If we increase the number  of infected females, then the endemic equilibrium may become stable endemic or complete infection equilibrium. If we decrease the number of infected females at endemic equilibrium, then the endemic equilibrium may become disease free equilibrium. WIF denotes \emph{Wolbachia}-infected female mosquitoes.}
\label{fig:bifurcation diagram}
\end{figure}

\begin{table}
\centering
\begin{tabular}{|c|c|c|c|}
\hline
Parameter &Baseline  Value & Ranges&References\\
\hline
$ b_f $ & $0.5$ & $0.34-0.6$&\cite{Lounibos2008, Delatte2009, Monteiro2007}\\
\hline
$\phi_u$& $50/$day & $0-75$&\cite{Delatte2009}\\
\hline
$\phi_w$& $51$/day & $0-75$&\cite{Delatte2009}\\
\hline
$\psi$ & $0.01/$day & &Assume\\
\hline
$\mu_a$ & $0.02/$day & &Assume\\
\hline
$v_w$& $0.9$ & $0-1$&Assume\\
\hline
$\mu_{fu}$ &  $0.061/$day & $1/55-1/11$&\cite{Delatte2009, Styer2007, Marquardt2005}\\
\hline
$\mu_{fw}$ &  $0.068/$day & $1/55-1/11$&\cite{Delatte2009, Styer2007, Marquardt2005}\\
\hline
$\mu_{mu}$ &  $0.068/$day &  $1/31-/7$&\cite{Delatte2009, Styer2007, Marquardt2005}\\
\hline
$\mu_{mw}$ &  $0.068/$day & $1/31-/7$&\cite{Delatte2009, Styer2007, Marquardt2005}\\
\hline
$K_a$& $100,000$ &&Assume\\
\hline
\end{tabular}
\caption{Baseline values for parameters in Model (\ref{E:mainmodel1}).}
\label{table:baseline and ranges}
\end{table}

\begin{table}
\centering
\begin{tabular}{|p{2.4cm}|p{3.7cm}|p{5.5cm}|p{4cm}|}
\hline
Vertical  &Disease Free&Endemic &Complete Infection\\
Transmission  &Equilibrium (DFE)& Equilibrium (EE) &Equilibrium (CIE) \\
\hline
Complete &$R_0<1$ and $R_{0u}>1$, LAS&$R_0<1$ and $R_{0w}>1$, unstable&$R_{0w}>1$, LAS\\
\hline
 Incomplete &$R_0<1$ and $R_{0u}>1$, LAS&$0.5<v_w<1$ and $4v_w(v_w-1)<R_0<1$, when $k=k_1$, LAS, when $k=k_2$, unstable&does not exist\\
\hline
\end{tabular}
\caption{Threshold condition for existence of disease free equilibrium, endemic equilibrium, and complete infection equilibrium and their stability. $R_{0u}$ is the threshold for \emph{Wolbachia}-free mosquito population, and $R_{0w}$ is the threshold for \emph{Wolbachia}-infected mosquito population. Only when $R_{0u}>1$, \emph{Wolbachia}-free may grow, and only when $R_{0w}>1$, \emph{Wolbachia}-infected population may  grow.}
\label{table:summary_stability}
\end{table}

\section{Results}  \label{Results}
We observed that  \emph{Wolbachia}  can persist when $R_0^*<R_0$, where $R_0^*$
is the turning point of the backward bifurcation.
Three equilibria, namely, disease free equilibrium, endemic equilibrium, and complete infection equilibrium coexist  when $R_0<1$,  $R_{0w}>1$, $v_w=1$, and $R_{0u}>1$ as shown in Figure  \ref{fig:nu=1endemic}.   The disease free equilibrium is LAS whenever  $R_0<1$ and $R_{0u}>1$, and    complete infection equilibrium exists and is LAS as long as  $R_{0w}>1$.  The unique  endemic equilibrium is not  LAS, which  can become disease free equilibrium by decreasing the number of infected individuals or become complete infection equilibrium by increasing the number of infected individuals.
Figures \ref{fig:nu=0.9endemic},  \ref{fig:nu=0.8endemic}, and  \ref{fig:nu=0.75endemic} showed that two endemic equilibrium points exist when $\sqrt{4v_wv_u}<R_0<1$ and $v_w>0.5$, and only one of them is LAS proven by numerical simulations.  When $v_w$ is larger,  $R^*$  is  closer to one. The condition for the existence of the equilibria and their stability are summarized in Table \ref{table:summary_stability}.

Initial condition thresholds for an epidemic to occur vary with the vertical transmission rate as shown in Table \ref{table:different initial condition threshold nu}. When $v_w=0.9$, the epidemic will spread if at least $ 40\%$ of the population are initially infected.  When $v_w=0.95$, the epidemic will spread if at least $34\%$ of the population are initially infected. When $v_w=1$, the epidemic will spread if at least $ 28\%$ of the population are initially infected.  When the vertical transmission rate is high, the threshold for the number of initially infected individuals to start  \emph{Wolbachia}  epidemic is low.

The initial condition threshold for an epidemic to occur  varies with the ratio of death rates of \emph{Wolbachia}-infected male mosquitoes to  death rates of \emph{Wolbachia}-free male mosquitoes, $\frac{\mu_{mw}}{\mu_{mu}}$, while fixing other parameters. The larger $\frac{\mu_{mw}}{\mu_{mu}}$ is, the larger the  initial number of infected individuals  required to start  a \emph{Wolbachia}  epidemic. With higher vertical transmission rate, a smaller percentage of  \emph{Wolbachia}  carriers can invade, as shown in Table \ref{table:different initial condition threshold nu}.
Similarly, if we increase $\phi_w$ or $\mu_{fu}$, or decrease $\phi_u$ or $\mu_{fw}$, then the threshold for initial infection is smaller. Increasing  $\frac{\psi}{\mu_a+\psi}$ will increase the epidemic threshold for initial infection.   The  thresholds for fraction of  initial infections decreasing with the basic reproduction number is shown in Figure \ref{fig:fraction}.

\begin{figure}
\begin{center}
\includegraphics[width=0.45\textwidth]{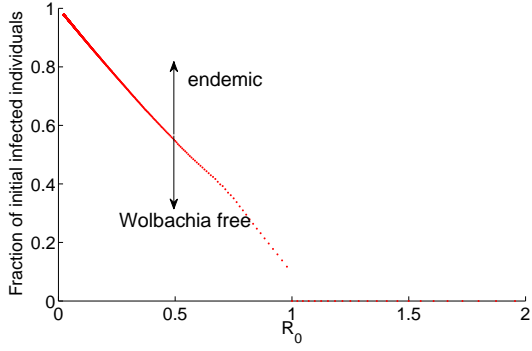}
\end{center}
\caption{Thresholds for fraction of infected individuals vary with reproduction number. $A_{u0}+A_{w0}=A_{u}^0$, $F_{u0}+F_{w0}=F_{u}^0$, $M_{u0}+M_{w0}=M_{u}^0$.  When $R_0<1$, the smaller $R_0$ is, the larger number of infected female mosquitoes are needed to be released for \it Wolbachia \rm to be endemic. The \it Wolbachia \rm infection is only sustained if the fraction of WIF mosquitoes is above the red dotted line.
}
\label{fig:fraction}
\end{figure}

The reproduction number is  very sensitive to the vertical transmission rate, egg laying rates of \emph{Wolbachia}-infected mosquitoes,  egg laying rates of \emph{Wolbachia}-free mosquitoes,  and death rates of \emph{Wolbachia}-infected female mosquitoes and \emph{Wolbachia}-free female mosquitoes as shown in Equation (\ref{equation:R0}). The reproduction number varies directly with either the vertical transmission rate, the egg laying rates of \emph{Wolbachia}-infected mosquitoes, or the death rates of \emph{Wolbachia}-free female mosquitoes. The  reproduction number varies inversely with egg laying rates of \emph{Wolbachia}-free mosquitoes, or the death rates of \emph{Wolbachia}-infected female mosquitoes, but towards opposite direction.

We compared five strategies  before the release of \emph{Wolbachia}-infected female mosquitoes:
\begin{itemize}
\item DFE: releasing \emph{Wolbachia}-infected female mosquitoes at the disease free equilibrium,
\item KHA: first killing half of the aquatic stage of mosquitoes,
\item  KHM: first killing half of the wild adult mosquitoes,
\item  KHM2: first killing half of the wild adult mosquitoes, and then killing half of the adult mosquitoes again after two weeks,
\item  KHMA: first killing half of the wild mosquitoes and half aquatic stage of mosquitoes.
\end{itemize}
 The ratios of the minimum  number of \emph{Wolbachia}-infected female mosquitoes that can lead to persistence of \emph{Wolbachia} to the number of female mosquitoes at disease free equilibrium are listed in Table \ref{table:thresholds} in decreasing order.  Notice that killing the adult mosquitoes only once is not an effective strategy because the aquatic stage mosquitoes hatch and quickly replace the wild uninfected population.  Killing both the adult and aquatic (larvae) stage mosquitoes before releasing the infected mosquitoes is the most effective strategy.

\begin{table}
\centering
\begin{tabular}{|c|c|c|c|}
 \hline
Initial condition& $v_w=0.9$&$v_w=0.95$ &$v_w=1$\\
\hline
$A_{u0}$&$\leq 60 \%A_u^0$&$\leq 66 \%A_u^0$ &$\leq 72 \%A_u^0$\\
\hline
$A_{w0}$&$\geq 40\%A_u^0$&$\geq 34 \%A_u^0$ &$\geq 28 \%A_u^0$\\
\hline
$F_{u0}$&$\leq 60\%F_u^0$&$\leq 66 \%F_u^0$ &$\leq 72 \%F_u^0$\\
\hline
$F_{w0}$&$\geq 40 \%F_u^0$&$\geq 34 \%F_u^0$ &$\geq 28 \%F_u^0$\\
\hline
$M_{u0}$&$\leq 60 \%M_u^0$&$\leq 66 \%M_u^0$ &$\leq 72 \%M_u^0$\\
\hline
$M_{w0}$&$\geq 40 \%M_u^0$&$\geq 34 \%M_u^0$ &$\geq 28 \%M_u^0$\\
\hline
\end{tabular}
\caption{ Initial condition  thresholds for epidemic to occur with different vertical transmission rates. When the number is \emph{Wolbachia}-carrying mosquitoes is above the threshold, \emph{Wolbachia}-infected mosquitoes can petsit, otherwise, they are wiped out by \emph{Wolbachia}-free mosquitoes.
}
\label{table:different initial condition threshold nu}
\end{table}

\begin{table}
\centering
\begin{tabular}{|p{1.6cm}|p{3.6cm}|p{3.6cm}|p{3.6cm}|}
\hline
Scenario &minimum release ratio  when $R_0=0.85$ & minimum release ratio  when $R_0=0.9$ &minimum release ratio  when $R_0=0.95$ \\
\hline
DFE & $1.007$ & $0.571$&$0.259$\\
\hline
KHA & $0.549$ & $0.317$&$0.151$\\
\hline
KHM &  $0.478$ & $0.244$&$0.107$\\
\hline
KHM2& $0.329$ & $0.165$ &$0.072$\\
\hline
KHMA & $0.210$ & $0.112$ &$0.051$\\
\hline
\end{tabular}
\caption{Different  population suppression strategies applied before release of  \emph{Wolbachia}  infected female mosquitoes  can reduce the minimum number of \emph{Wolbachia}-infected mosquitoes that can lead to persistence of \emph{Wolbachia}. The top row indicates that you have to release $1.007$ times as many infected female mosquitoes as there are wild infect mosquitoes to establish an infection in the wild that starts at the DFE when $R_0=0.85$. The number of infected mosquitoes that need to be released to establish an infection decreases with the strategies: KHA denotes killing half of the aquatic stage mosquitoes, KHM denotes killing half of adult  wild mosquitoes, KHM2 denotes  killing half adult  of wild mosquitoes once, then kill  half adult  wild mosquitoes again after two weeks, and finally KHMA denotes killing half adult wild mosquitoes and half aquatic stage of mosquitoes.  }
\label{table:thresholds}
\end{table}

 \style{ The results section should provide details of all of the experiments that are required to support the conclusions of the paper. There is no specific word limit for this section, but details of experiments that are peripheral to the main thrust of the article and that detract from the focus of the article should not be included. The section may be divided into subsections, each with a concise subheading. Large datasets, including raw data, should be submitted as supporting files; these are published online alongside the accepted article. The results section should be written in the past tense.}


\section{Discussion}  \label{Discussion}
We developed a  model considering two sex of aquatic stage and adult mosquitoes, diversity in the death rates of \emph{Wolbachia}-infected mosquitoes and \emph{Wolbachia}-free  mosquitoes, and egg laying rates of \emph{Wolbachia}-infected female mosquitoes and \emph{Wolbachia}-free female mosquitoes. The general model is not constrained to particular weather condition, specific  \emph{Wolbachia}  strains,  or specific mosquito species, and it can be easily adapted to \emph{Aedes aegypti}  or \emph{Aedes albopictus} at any location with parameters calibrated using realistic  environmental factors, such as  temperature and rainfall etc.

We found conditions for the existence of multiple equilibria and  backward bifurcation.
If  vertical transmission is complete, then  a unique endemic equilibrium exists  and is not LAS when $R_0<1$ and $R_{0u}>1$.  Backward bifurcation occurs when endemic equilibrium changes into disease free equilibrium if we decrease the initial number of infected individuals, or it reaches another LAS  equilibrium if we increase the number of initially infected individuals.   When vertical transmission is incomplete but more than $50\%$,  and $R_0<1$, two endemic equilibria coexist, but only one endemic equilibrium is LAS such that it becomes disease free equilibrium or endemic equilibrium by perturbation. Since \emph{Wolbachia}-infected mosquitoes are less capable of transmitting dengue virus, complete \emph{Wolbachia}-infection is the ideal case for dengue control.

 When $R_0>1$,  \emph{Wolbachia}  can spread with a small number of initially infected mosquitoes, although very slowly. Population replacement may occur.  When $R_0<1$,  \emph{Wolbachia}  can spread if initial number of infected individuals exceeds a threshold, which depends on vertical transmission rate, ratio of egg laying rates  of  \emph{Wolbachia}-infected females to egg   laying rates of \emph{Wolbachia}-free female mosquitoes, and  ratio of  death rates of infected female mosquitoes to death rates of uninfected female mosquitoes.
 A smaller number of  \emph{Wolbachia}-infected female mosquitoes is needed to be released for persistence of \emph{Wolbachia} if a population suppression strategy is implemented before the  release.

The reproduction number is the product of  the vertical transmission rate, ratio of the egg laying rates of \emph{Wolbachia}-infected mosquitoes to egg laying rates of \emph{Wolbachia}-free mosquitoes, and the ratio of death rates of  \emph{Wolbachia}-free mosquitoes to death rates of  \emph{Wolbachia}  infected mosquitoes.
If the total number of eggs produced by \emph{Wolbachia}-infected mosquitoes through vertical transmission is more than the total number of eggs produced by \emph{Wolbachia}-free mosquitoes, then $R_0 >1$, such that complete infection or endemic equilibrium is LAS. 


The number of    \emph{Wolbachia}-infected mosquitoes required for  sustained \emph{Wolbachia}   infection depends on  vertical transmission rate, $v_w$, the ratio of the number of eggs  laid by \emph{Wolbachia}-infected mosquitoes to the number of eggs laid by \emph{Wolbachia}-free mosquitoes,
$\frac{\phi_w}{\phi_{u}}$,  the ratio of the death rates of  \emph{Wolbachia}-free females to death rates of  \emph{Wolbachia}-infected females, $\frac{\mu_{fu}}{\mu_{fw}}$, and the ratio of the death rates of  \emph{Wolbachia}-free males to death rates of  \emph{Wolbachia}-infected males $\frac{\mu_{mu}}{\mu_{mw}}$. These parameters depend on  pariculiar\emph{Wolbachia}  strain. If the life span of a mosquito is shorter, then the mosquitoes will lay fewer eggs. Once we know the specific  \emph{Wolbachia}  strain that a specific mosquito species, such as  \emph{Aedes aegypti}  or  \emph{Aedes albopictus}   is carrying, we can estimate the number of infected individuals needed to be released for sustainable  \emph{Wolbachia} establishment using this model.

We find that reducing both the  aquatic stage and adult uninfected mosquitoes before releasing \emph{Wolbachia}-infected female mosquitoes is the most effective strategy to reduce the number of \emph{Wolbachia}-infected female mosquitoes needed for  \emph{Wolbachia} persistence (Table \ref{table:thresholds}).  The second most effective strategy was to repeatedly kill the wild uninfected mosquitoes (to reduce both the adult and the aquatic stage mosquitoes) before releasing the infected mosquitoes.

The model and analysis can help in understanding how  \emph{Wolbachia} can invade and persist in mosquito populations. In future research, we will couple our model to a dengue fever transmission model
and analyze the impact that a bacteria infected mosquito population has on the spread of dengue.

 \style{ The discussion should spell out the major conclusions of the work along with some explanation or speculation on the significance of these conclusions. How do the conclusions affect the existing assumptions and models in the field? How can future research build on these observations? What are the key experiments that must be done? The discussion should be concise and tightly argued. The results and discussion may be combined into one section, if desired.}

\section*{Acknowledgments}
 This work was supported by the endowment for the Evelyn and John G. Phillips Distinguished Chair in Mathematics at Tulane University,  the National Science Foundation MPS award DMS-1122666,  and the NIH/NIGMS  program for Models of Infectious Disease Agent Study (MIDAS) award U01GM097661.
 The content is solely the responsibility of the authors and does not necessarily represent the official views of the
 National Science Foundation or the National Institutes of Health.
The authors thank  Michael Robert for his many helpful comments.

\section*{Appendix}
The system of equations for mosquito population dynamics  is:
\begin{subequations}
\label{E:mosquitopopulationmodel}
\begin{align}
\frac{\dif A}{\dif t} &=\phi F\left(1-\frac{A}{K_a}\right)-\mu_aA -\psi A\\
\frac{\dif F}{\dif t} &=\theta\psi A-\mu_{fu} F\\
\frac{\dif M}{\dif t} &=(1-\theta)\psi A -\mu_{mu} M
\end{align}
\end{subequations}
The parameters for this model are described in Table \ref{T:statevarandparameters}.

\begin{theorem}

The zero equilibrium for mosquito population dynamics is LAS when $R_{0u}<1$, while the steady state is LAS when $R_{0u}>1$, where $R_{0u}=\frac{\phi \theta \psi}{\mu_{fu}(\mu_a+\psi)}$.
\end{theorem}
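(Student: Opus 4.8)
The plan is a routine linearized-stability analysis. First I would locate the equilibria of \eqref{E:mosquitopopulationmodel}: setting the right-hand sides to zero, the last two equations force $F = \theta\psi A/\mu_{fu}$ and $M = (1-\theta)\psi A/\mu_{mu}$, and substituting into the first equation gives either $A = 0$ (the zero equilibrium $(0,0,0)$) or $R_{0u}\!\left(1 - A/K_a\right) = 1$, i.e.\ the positive steady state with $A^* = K_a\bigl(1 - R_{0u}^{-1}\bigr)$, together with the corresponding $F^*,M^*$. This positive equilibrium is biologically feasible ($A^*>0$) precisely when $R_{0u}>1$, so the two cases in the statement are genuinely complementary.

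Next I would compute the Jacobian of the vector field. Since $M$ does not enter the $A$- and $F$-equations, the Jacobian is block lower-triangular, with a scalar block equal to $-\mu_{mu}<0$ and a $2\times 2$ block in the $(A,F)$ variables,
\[
J_2 \;=\; \begin{bmatrix} -\dfrac{\phi F}{K_a}-(\mu_a+\psi) & \phi\!\left(1-\dfrac{A}{K_a}\right) \\[2mm] \theta\psi & -\mu_{fu}\end{bmatrix}.
\]
Consequently all three eigenvalues have negative real part if and only if $\operatorname{tr}J_2<0$ and $\det J_2>0$ (the two-dimensional Routh--Hurwitz condition). The trace is negative at both equilibria with no computation needed.

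It then remains to check the determinant in each case. At $(0,0,0)$ one finds $\det J_2 = (\mu_a+\psi)\mu_{fu}-\phi\theta\psi = (\mu_a+\psi)\mu_{fu}\bigl(1-R_{0u}\bigr)$, which is positive exactly when $R_{0u}<1$; that establishes local asymptotic stability of the zero equilibrium in that regime. At the positive equilibrium I would substitute the identities $1-A^*/K_a = R_{0u}^{-1}$ and $F^* = \theta\psi A^*/\mu_{fu}$; a short simplification shows the $(\mu_a+\psi)$ contributions cancel, leaving $\det J_2 = \phi\theta\psi A^*/K_a>0$, so the positive steady state is LAS when $R_{0u}>1$.

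I do not anticipate a serious obstacle: the argument is essentially bookkeeping. The only points deserving care are verifying the feasibility threshold in the first step (so that exactly one relevant equilibrium exists in each regime), and carrying out the cancellation in $\det J_2$ at the positive equilibrium by inserting the equilibrium relations rather than expanding directly.
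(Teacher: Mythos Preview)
Your proposal is correct and follows essentially the same route as the paper: linearize, exploit the block lower-triangular structure to peel off the eigenvalue $-\mu_{mu}$, and then analyze the remaining $2\times 2$ block in $(A,F)$. The only cosmetic difference is that the paper writes out the characteristic polynomial of that block, whereas you invoke the equivalent trace--determinant criterion.
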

\begin{proof}
Jacobian matrix of system (\ref{E:mosquitopopulationmodel}) is:

\begin{equation*}
J=\begin{bmatrix}
-\mu_a-\psi-\phi \frac{F}{K_a}&\phi\left(1-\frac{A}{K_a}\right)& 0\\
\theta\psi&-\mu_{fu}& 0\\
(1-\theta)\psi&0&-\mu_{mu}
\end{bmatrix}.
\end{equation*}
The Jacobian matrix for no-infection equilibrium is:
\begin{equation*}
J_0=\begin{bmatrix}
-\mu_a-\psi&\phi& 0\\
\theta\psi&-\mu_{fu}& 0\\
(1-\theta)\psi&0&-\mu_{mu}
\end{bmatrix}.
\end{equation*}
The characteristic polynomial of $J_0$ is:
\begin{equation}
(\lambda+\mu_{mu})[(\lambda+\mu_a+\psi)(\lambda+\mu_{fu})-\phi\theta\psi]=0
\end{equation}

If $R_{0u}<1$, then all eigenvalues are negative, the zero equilibrium  is LAS.

The Jacobian matrix at  steady state is:
\begin{equation*}
J_{ss}=\begin{bmatrix}
-\mu_a-\psi-\phi \frac{F^*}{K_a}&\phi\left(1-\frac{A^*}{K_a}\right)& 0\\
\theta\psi&-\mu_{fu}& 0\\
(1-\theta)\psi&0&-\mu_{mu}
\end{bmatrix}.
\end{equation*}
Characteristic polynomial of $J_{ss}$ is:
\begin{equation}
(\lambda+\mu_{mu})\left[\lambda+\mu_a+\psi+\frac{\phi\theta\psi}{\mu_{fu}}\left(1-\frac{(\mu_a+\psi)\mu_{fu}}{\phi\theta\psi}\right)\right](\lambda+\mu_{fu})-\phi\theta\psi]=0
\end{equation}

If $R_{0u}>1$, then all eigenvalues of $J_{ss}$ are negative, and the non-zero steady state  is LAS.
\end{proof}
\begin{theorem}
The zero equilibrium: $(0, 0, 0)$ is globally asymptotically stable (GAS) when $R_{0u}<1$.  The steady state: $(K_a(1-\frac{1}{R_{0u}}), \frac{K_a\theta\psi}{\mu_{fu}}(1-\frac{1}{R_{0u}}), \frac{K_a(1-\theta)\psi}{\mu_{mu}}(1-\frac{1}{R_{0u}}))$ is GAS when $R_{0u}>1$.
\end{theorem}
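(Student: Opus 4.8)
The plan is to establish global asymptotic stability (GAS) for the planar mosquito population system \eqref{E:mosquitopopulationmodel} by a combination of invariant-region, monotonicity, and Lyapunov arguments, exploiting the fact that the $M$-equation decouples: once $A(t)$ is known, $M(t)$ is determined by a linear non-autonomous equation that forces $M(t)\to\frac{(1-\theta)\psi}{\mu_{mu}}A^\infty$ whenever $A(t)\to A^\infty$. Thus it suffices to prove GAS for the two-dimensional $(A,F)$ subsystem
\begin{align*}
\frac{\dif A}{\dif t} &=\phi F\Bigl(1-\tfrac{A}{K_a}\Bigr)-(\mu_a+\psi)A,\\
\frac{\dif F}{\dif t} &=\theta\psi A-\mu_{fu} F,
\end{align*}
and then lift the conclusion to $M$ via the decoupling just described.

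First I would check that the biologically relevant region $\Omega=\{(A,F): 0\le A\le K_a,\ F\ge 0\}$ is positively invariant: on $A=0$ we have $\dot A=\phi F\ge 0$, on $A=K_a$ we have $\dot A=-(\mu_a+\psi)K_a<0$, and on $F=0$ we have $\dot F=\theta\psi A\ge 0$, so trajectories cannot leave $\Omega$; boundedness is then automatic. Second, I would compute the nullclines and verify that in $\Omega$ the only equilibria are the origin and, when $R_{0u}>1$, the stated interior point $E^*=\bigl(K_a(1-\tfrac1{R_{0u}}),\ \tfrac{K_a\theta\psi}{\mu_{fu}}(1-\tfrac1{R_{0u}}),\ \tfrac{K_a(1-\theta)\psi}{\mu_{mu}}(1-\tfrac1{R_{0u}})\bigr)$. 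Third, I would observe that the $(A,F)$ system is a cooperative (monotone) planar system on $\Omega$ because the off-diagonal Jacobian entries $\partial\dot A/\partial F=\phi(1-A/K_a)\ge 0$ and $\partial\dot F/\partial A=\theta\psi\ge 0$ are nonnegative throughout $\Omega$. For cooperative planar systems, bounded trajectories converge to equilibria (no periodic orbits are possible — this is a standard consequence of monotone dynamical systems theory, e.g.\ the planar case of Hirsch's theorem), so every orbit in $\Omega$ converges to an equilibrium. Fourth, to pin down \emph{which} equilibrium: when $R_{0u}<1$, the origin is the unique equilibrium in $\Omega$, hence GAS. When $R_{0u}>1$, I would rule out convergence to the origin from any point with $F(0)>0$ or $A(0)>0$ by a linearization/instability argument at $0$ (Theorem in the excerpt already gives that $0$ is unstable when $R_{0u}>1$, since the characteristic polynomial $(\lambda+\mu_a+\psi)(\lambda+\mu_{fu})-\phi\theta\psi$ has a positive root precisely when $\phi\theta\psi>\mu_{fu}(\mu_a+\psi)$), together with the fact that the stable manifold of $0$ cannot intersect the interior of $\Omega$; combined with convergence-to-equilibrium and uniqueness of the interior equilibrium, this forces every interior orbit to converge to $E^*$. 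Finally, transferring to the full three-dimensional system: if $A(t)\to A^\infty$, then from $\dot M=(1-\theta)\psi A-\mu_{mu}M$ and the variation-of-constants formula, $M(t)\to \frac{(1-\theta)\psi}{\mu_{mu}}A^\infty$, which yields the claimed $M$-component in both cases.

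An alternative, and perhaps cleaner, route for the $R_{0u}>1$ case is a direct Lyapunov function of the standard logistic/SIR type, e.g.\ $V(A,F)=\bigl(A-A^*-A^*\ln\tfrac{A}{A^*}\bigr)+c\bigl(F-F^*-F^*\ln\tfrac{F}{F^*}\bigr)$ with the constant $c>0$ chosen using the equilibrium relations $\phi F^*(1-A^*/K_a)=(\mu_a+\psi)A^*$ and $\theta\psi A^*=\mu_{fu}F^*$ to make the cross terms in $\dot V$ cancel; one then checks $\dot V\le 0$ on $\Omega$ with equality only at $E^*$ and invokes LaSalle. I would likely present the monotone-systems argument as the main proof (it handles both cases uniformly and avoids guessing the Lyapunov coefficient) and mention the Lyapunov function as a remark. \textbf{The main obstacle} I anticipate is the bookkeeping at the boundary of $\Omega$ in the $R_{0u}>1$ case — specifically, showing cleanly that no interior trajectory is attracted to the origin. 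The monotone-systems framework resolves this, but one must be careful that the cooperativity holds on all of $\Omega$ (it does, since $0\le A\le K_a$ keeps $1-A/K_a\ge 0$) and that the relevant comparison/ordering arguments apply up to the boundary; getting the instability-of-origin step and the exclusion of the origin's basin stated rigorously is where the real work lies, the rest being routine verification.
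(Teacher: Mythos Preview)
Your proposal is correct and takes a genuinely different---and more elementary---route than the paper. Both proofs reduce to the $(A,F)$ subsystem and then lift to $M$ via the cascade structure (the paper cites Vidyasagar for this step; you use variation of constants directly). The difference lies in how periodic orbits are excluded and how convergence to the interior equilibrium is forced when $R_{0u}>1$. The paper invokes the Li--Wang global-stability theorem, verifying its hypotheses via the second additive compound matrix of the Jacobian (following Muldowney) to establish orbital asymptotic stability of any periodic orbit, together with a uniform-persistence argument (Hofbauer) to obtain the compact absorbing set. You instead use directly that the planar system is cooperative on $\Omega$, so monotone dynamics rules out periodic orbits outright, and then dispatch the origin by its saddle structure (the stable eigenvector lies outside the closed positive cone, hence no interior orbit can converge to $0$). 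Your approach buys simplicity---compound matrices are overkill in two dimensions, where the second compound is just the trace and Bendixson--Dulac would already suffice---while the paper's machinery is the template from Dumont (2012) that transfers more readily to higher-dimensional mosquito models. One small gap to close: your region $\Omega$ as written has no upper bound on $F$, so ``boundedness is then automatic'' needs one more line; from $\dot F\le \theta\psi K_a-\mu_{fu}F$ one gets $\limsup_{t\to\infty} F(t)\le \theta\psi K_a/\mu_{fu}$, which is exactly the invariant box the paper records in its subsequent theorem.
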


\begin{proof} We follow the approach in \cite{Dumont2012}. First we consider the subsystem:
\begin{subequations} \label{E:mosquitopopulation1}
\begin{align}
\frac{\dif A}{\dif t} &=\phi F\left(1-\frac{A}{K_a}\right)-\mu_aA -\psi A\\
\frac{\dif F}{\dif t} &=\theta\psi A-\mu_{fu} F
\end{align}
\end{subequations}

where $f=( A, F)$ is $C^1$  on an open set $D \subset \mathbb R^2$. 

Let $\delta_1=\delta_2=1$, $\delta_1 \frac{df_2}{dx_1} =\theta\psi >0$, $\delta_1 \frac{df_1}{dx_2} =\phi \geq 0$,
 and $\delta_2  \frac{df_2}{dx_2} =\mu_{fu} >0$.
According to the definition of tridiagonal feedback \cite{Dumont2012},  system $\ref{E:mosquitopopulation1}$ is a monotone tridiagonal feedback system with Poincar\'{e}-Bendixson property \cite{Mallet-Paret1996}.

Recall Theorem $2$ in \cite{LiandWang2002}.
If the systems of ODEs $dx/dt=f(x), x \subset  D$ satisfies:
\begin{enumerate}
\item The system exists on a compact absorbing set $K \subset  D $.
\item A unique equilibrium point E exists and is LAS.
\item The system has Poincar\'{e}-Bendixson property.
\item  Each periodic orbit of the system is asymptotically stable.
\end{enumerate}
Then E  is globally asymptotically stable in $D$.

To prove that each periodic orbit $\Omega={p(t):0 \leq t \leq w}$ of system $(\ref{E:mosquitopopulation1})$ is asymptotically stable, we follow \cite{Muldowney1990} and Theorem $3$ in \cite{Dumont2012}.
We need to prove that  the linear system
$\frac{dz(t)}{dt} = J_F^{ [2]}
 (p(t))z(t)$
is asymptotically stable, where  $J_F^{ [2]}$ is the second additive compound matrix of the
Jacobian matrix $J_F$ associated with system $(\ref{E:mosquitopopulation1})$.
For system  $(\ref{E:mosquitopopulation1})$,  $J_F^{ [2]}=-\left(\psi +\mu_a+\frac{\phi}{K_a}F+\mu_{fu}\right)$.
We build  the following linear system with one equation and the right hand side is the compound matrix of the
Jacobian matrix $J_F$.
\begin{align*}
\frac{\dif X}{\dif t} &=-\left(\psi +\mu_a+\frac{\phi}{K_a}F+\mu_{fu}\right)X.
\end{align*}
Let Lyapunov function $V(X, A, Y)=|X|$.

The right derivative of $V$ along the solution paths (X) and $(A, F)$ is:
$D_+(V(t)=-(\psi +\mu_a+\frac{\phi}{K_a}F+\mu_{fu})|X|$, which implies that $V(t) \rightarrow 0$, and $X(t) \rightarrow 0$ as
$t \rightarrow \infty $. Therefore, the linear system \ref{E:mosquitopopulation1} is asymptotically stable, and the solution $(A, F)$ is asymptotically orbitally stable  with asymptotic phase.

By the same argument in   \cite{Dumont2012}, the system  \ref{E:mosquitopopulation1} is uniformly persistent  in $D \subset \mathbb R^2$. The zero equilibrium $(0, 0, 0)$ is isolated and the largest compact invariant outside $D$ is $(A^* , F^*)$, which is absorbing and the system \ref{E:mosquitopopulation1} is uniformly persistent \cite{Hofbauer1989}.  The conditions for Theorem 2 in \cite{LiandWang2002} are all satisfied.
Therefore, $(0,0)$ is GAS whenever $R_{0u}<1$, and $ (A^*, F^*)$ exists  and is GAS when $R_{0u}>1$.

Following  \cite{Vidyasagar1980} and Theorem $4$ in \cite{Dumont2012}, $(0, 0, 0)$ is GAS whenever $R_{0u}<1$ and $(A^*, F^*, M^*)$ is GAS whenever $R_{0u}>1$.
\end{proof}
\begin{theorem}
\begin{equation}\label{E:domwellposed}
\mathcal{D} = \left\{ \begin{pmatrix}
                             A \\F \\ M
                            \end{pmatrix}
                    \in \mathbb{R}^3 \left|
                    \begin{array}{c}
                    0\leq A \leq K_a, \\
                    0 \leq F \leq \frac{\psi \theta K_a}{\mu_{fu}}, \\
                    0 \leq M \leq \frac{\psi (1-\theta)K_a}{\mu_{mu}}
                    \end{array}
                    \right. \right\}.
\end{equation}  \noindent
is an invariant region under the flow induced by (\ref{E:mosquitopopulationmodel}).
\end{theorem}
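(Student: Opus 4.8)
The plan is to show that at every point of $\partial\mathcal{D}$ the vector field of (\ref{E:mosquitopopulationmodel}) either points into $\mathcal{D}$ or is tangent to $\partial\mathcal{D}$, and then to invoke the standard subtangentiality (Nagumo-type) criterion to conclude forward invariance. Since $\mathcal{D}$ is a rectangular box, $\partial\mathcal{D}$ is the union of six faces, on each of which exactly one coordinate is pinned at an extreme value, so the verification reduces to six elementary sign checks, together with a short remark about the edges and corners where two or three constraints are simultaneously active.

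First I would treat the three ``lower'' faces, obtaining invariance of the nonnegative octant. On $A=0$ the first equation gives $\dif A/\dif t=\phi F\ge 0$ whenever $F\ge 0$; on $F=0$ the second gives $\dif F/\dif t=\theta\psi A\ge 0$ whenever $A\ge 0$; and on $M=0$ the third gives $\dif M/\dif t=(1-\theta)\psi A\ge 0$ whenever $A\ge 0$. Hence no solution starting in the closed octant can cross a coordinate hyperplane outward.

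Next I would handle the three ``upper'' faces. On $A=K_a$ the logistic factor $1-A/K_a$ vanishes, so $\dif A/\dif t=-(\mu_a+\psi)K_a<0$, and $A$ can never exceed $K_a$. On $F=\psi\theta K_a/\mu_{fu}$ the second equation gives $\dif F/\dif t=\theta\psi A-\psi\theta K_a=\theta\psi(A-K_a)$, which is $\le 0$ because $A\le K_a$ on $\mathcal{D}$ by the previous step; similarly, on $M=\psi(1-\theta)K_a/\mu_{mu}$ one gets $\dif M/\dif t=(1-\theta)\psi(A-K_a)\le 0$. Combined with the octant computations, this shows the flow never crosses $\partial\mathcal{D}$ to the exterior.

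The point requiring a little care — the nearest thing to an obstacle in an otherwise routine argument — is the logical order: the $F$- and $M$-bounds rely on $A\le K_a$, so the $A$-bound must be proved first, and at the edges and corners of $\mathcal{D}$ (where several constraints are active) one should check that the field lies in the tangent cone of the box, which follows by combining the individual face inequalities. Since the right-hand side of (\ref{E:mosquitopopulationmodel}) is $C^1$, solutions exist and are locally unique, and the subtangentiality condition then gives that $\mathcal{D}$ is positively invariant, as claimed.
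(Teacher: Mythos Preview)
Your argument is correct and complete. The paper itself does not give a direct proof; it simply refers the reader to Lemmas~4.2 and~4.3 of Moulay, Aziz-Alaoui and Cadivel (2011), where the same box-invariance is established by essentially the same face-by-face sign checks you carry out here. So your approach is not different in spirit from the cited source, just more self-contained than the paper's one-line citation.

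One small stylistic remark: your ``logical order'' caveat is slightly miscast. In the Nagumo/subtangentiality framework you invoke, there is no need to establish the $A$-bound \emph{before} the $F$- and $M$-bounds. You are checking the inward-pointing condition at boundary points of the closed box $\overline{\mathcal{D}}$, and any such point already satisfies $A\le K_a$ by definition of membership in $\overline{\mathcal{D}}$; the face checks are therefore simultaneous rather than sequential. The sequential phrasing would be appropriate if you were instead arguing by comparison (first prove $A(t)\le K_a$ for all $t\ge 0$, then feed that into a Gronwall-type bound for $F$ and $M$), but that is a different, though equally valid, route. Either way the conclusion stands.
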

\begin{proof}
 The proof directly  follows the proofs for Lemma 4.2 and 4.3 in  \cite{Moulay2011}.
\end{proof}

\bibliographystyle{model1-num-names}
\bibliography{Wolbachia}
\end{document}